\documentclass[letterpaper, 10 pt, conference]{ieeeconf}  

\IEEEoverridecommandlockouts                              

\usepackage[left=0.75in,right=0.75in,top=1in,bottom=0.75in]{geometry}

\usepackage{amssymb}
\usepackage{amsmath}
\usepackage{graphicx}
\usepackage{bm}
\usepackage{url}

\usepackage[utf8]{inputenc}
\usepackage{pgfplots}
\usepgfplotslibrary{groupplots,dateplot}
\usetikzlibrary{patterns,shapes.arrows}
\pgfplotsset{compat=newest}

\newtheorem{definition}{Definition}

\newtheorem{proposition}{Proposition}

\newtheorem{remark}{Remark}

\usepackage{tikz}
\usepackage{tikz-3dplot}
\usetikzlibrary{calc}

\usepackage{algorithm}
\usepackage{algpseudocode}
\usepackage{soul} 
\usepackage{siunitx}
\usepackage{subcaption} 
\usepackage{hyperref}

\usepackage[autostyle]{csquotes}
\MakeAutoQuote{‘}{’}

\def\Inf{\operatornamewithlimits{inf\vphantom{p}}}

\usepackage[backend=bibtex, doi=false,isbn=false,url=false,eprint=false, autolang=none, sorting=none]{biblatex}
\graphicspath{/Figures/}

\definecolor{tum blue}{HTML}{0065BD}
\definecolor{tum blue 1}{HTML}{98C6EA}
\definecolor{tum blue 2}{HTML}{64A0C8}
\definecolor{tum blue 3}{HTML}{0073CF}
\definecolor{tum blue 4}{HTML}{005293}
\definecolor{tum blue 5}{HTML}{003359}

\definecolor{tum green}{HTML}{A2AD00}
\definecolor{tum orange}{HTML}{E37222}
\definecolor{tum ivory}{HTML}{DAD7CB}

\definecolor{tum dia violet}{HTML}{69085A}
\definecolor{tum dia dark blue}{HTML}{0F1B5F}
\definecolor{tum dia turquoise}{HTML}{00778A}
\definecolor{tum dia dark green}{HTML}{007C30}
\definecolor{tum dia light green}{HTML}{679A1D}
\definecolor{tum dia light yellow}{HTML}{FFDC00}
\definecolor{tum dia dark yellow}{HTML}{F9BA00}
\definecolor{tum dia dark orange}{HTML}{D64C13}
\definecolor{tum dia red}{HTML}{C4071B}
\definecolor{tum dia dark red}{HTML}{9C0D16}

\newcommand{\R}{\mathbb{R}}

\newcommand{\trans}[1]{#1^\top}

\newcommand{\inv}[1]{#1^{-1}}

\newcommand{\diff}[2]{\frac{\mathrm{d} #1}{\mathrm{d} #2}}
\newcommand{\pdiff}[2]{\frac{\partial #1}{\partial #2}}

\newcommand\copyrighttext{%
	\footnotesize \textcopyright 2025 EUCA. Accepted at 23rd European Control Conference (ECC). Link to proceedings: \url{https://ieeexplore.ieee.org/abstract/document/11186824}.
}  
\newcommand\copyrightnotice{%
	\begin{tikzpicture}[remember picture,overlay]  
		\node[anchor=south,yshift=10pt, xshift=0pt] at (current page.south) {\fbox{\parbox{\dimexpr\textwidth-\fboxsep-\fboxrule\relax}{\copyrighttext}}};  
	\end{tikzpicture}%
	\vspace{-0.35cm}  
}

\title{\LARGE \bf
Control Lyapunov Functions for Optimality in Sontag-Type Control
}

\author{Joscha F. Bongard$^{1}$, Boris Lohmann$^{1}$
\thanks{$^{1}$Chair of Automatic Control, Department of Engineering Physics and Computation, Technical University of Munich,
        Boltzmannstraße 15, 85748 Garching bei München, Germany
        {\tt\small \{joscha.bongard,lohmann\}@tum.de}}%
\thanks{*funded by the Deutsche Forschungsgemeinschaft (DFG, German Research Foundation) - 469341384}
}

\begin{document}

\maketitle
\thispagestyle{empty}
\pagestyle{empty}
\copyrightnotice{}

\begin{abstract}

Given a Control Lyapunov Function (CLF), Sontag's famous Formula provides a nonlinear state-feedback guaranteeing asymptotic stability of the setpoint. At the same time, a cost function that depends on the CLF is minimized. While there exist methods to construct CLFs for certain classes of systems, the impact on the resulting performance is unclear. This article aims to make two contributions to this problem: (1) We show that using the value function of an LQR design as CLF, the resulting Sontag-type controller minimizes a classical quadratic cost around the setpoint and a CLF-dependent cost within the domain where the CLF condition holds. We also show that the closed-loop system is stable within a local region at least as large as that generated by the LQR. (2) We show a related CLF design for feedback-linearizable systems resulting in a \emph{global} CLF in a straight-forward manner; The Sontag design then guarantees global asymptotic stability while minimizing a quadratic cost at the setpoint and a CLF-dependent cost in the whole state-space. Both designs are constructive and easily applicable to nonlinear multi-input systems under mild assumptions.

\end{abstract}


\section{Preliminaries}

\subsection{Introduction}

Even though control theory made great advancements in the past decades, the goals of controlling nonlinear systems with a large region of attraction (ROA) and in an optimal fashion are often competing. While numerical methods like model predictive control \cite{Primbs1998CLFBasedRHC} aim to achieve both, they introduce challenges such as delays due to solving optimization problems.
Some well known nonlinear methods are feedback linearization, backstepping and forwarding and sliding mode control \cite{Khalil2002NonlinearSystems}. Many of these nonlinear design methods however suffer from strong requirements on the system or relying on nontrivial choices the designer has to make. This often limits nonlinear methods to simple systems with a small number of states \cite{Sepulchre1997}.
For smooth systems, the ubiquitous linear quadratic regulator (LQR) achieves quadratic optimal control locally, but the linearization fundamentally limits the method, often generating smaller ROA around the setpoint as well as performance which might quickly deteriorate away from the vicinity of the setpoint.
One promising way to achieve larger ROAs for nonlinear systems is Sontag's Formula \cite{Sontag1989UniversalConstructionOfArtsteinsTheorem}, which is a method of using a system's Control Lyapunov function (CLF) to achieve Lyapunov stability by design. It also promises optimality w.r.t. some cost functional \cite{Sepulchre1997,Lohmann2023DiscussionOnNonlinQCAndSontagsFormula}.\\
While this makes Sontag's Formula an appealing choice, the choice of the used CLF remains challenging, especially since it influences the cost functional the controller is optimal w.r.t. in an unclear way.

\subsection{Related Work}

Some of the most straightforward approaches in the stabilization of nonlinear systems are Lyapunov-based ones which leverage Lyapunov theory to reach stability by design. One of the most well-known practical examples of this family is Sontag's Formula. 

The control law used in this paper is a variant of Sontag's Formula, which was proposed in \cite{Primbs1998CLFBasedRHC}. Optimality of this controller w.r.t. a meaningful cost was shown in \cite{Sackmann2000ModifiedOptimalControl}, the cost was however not explicitly given. This cost can be found in \cite{Lohmann2023DiscussionOnNonlinQCAndSontagsFormula}, which turns out to be a quadratic cost distorted by a scalar state-dependent factor cf. \eqref{eq:inverse_optimal_cost}.
The effect of the choice of CLF on the inverse optimal cost is unclear and, to the best of our knowledge, has not been investigated until now.

Finding a CLF for a nonlinear system is generally difficult, as the set of CLFs may be non-convex or even non-connected.
There exists a multitude of different methods for constructing global CLFs. However, they are limited to specific classes of systems and generally don't give guarantees on the resulting performance \cite{Primbs1999Diss, Sepulchre1997, Freeman1996RobustNonlinearControlDesign}.

The use of quadratic CLFs is a common approach and can be found in many works on the topic, e.g. \cite{Sepulchre1997,Yu2001ComparisonOfNonlinearControl}.
The idea of approximating the value function locally by a Riccati-based CLF to obtain a locally optimal controller can be found in \cite{Primbs1999Diss}. The author does however not investigate this in detail.
In \cite{Yu2001ComparisonOfNonlinearControl}, the authors present multiple constructions for CLFs based on Riccati methods.

The idea of constructing a nonlinear state feedback giving a large ROA, which also locally aligns with the LQR for optimality was discussed more recently in \cite{Manchester2017} in the context of control contraction metrics. This contraction-based design is of comparable complexity to the CLF-based one presented here, requiring a control contraction metric instead of a CLF.

\subsection{Contribution}
The contribution of this paper is twofold: First, we show that using a Sontag-type formula with a local CLF based on LQR design, the resulting controller is locally optimal in the sense that it recovers the LQR for smooth nonlinear systems under mild assumptions. We show that this controller generates a Region of Attraction (ROA) with a provable subset based on linearization at least as large as the one provable for the LQR.\\
Second, we present a related method for constructing CLFs for feedback linearizable systems that also achieves local optimality of the resulting controller and additionally achieves global asymptotic stability.

All results are applicable to nonlinear input-affine multi-input systems under mild assumptions. 
Numerical experiments show the behavior of the proposed design compared to widespread designs also yielding local optimality such as the LQR or a properly adjusted feedback linearizing controller. The code used to generate all plots is publicly accessible (see Section \ref{sec:numerical_results}).

\subsection{Notation}

Denote the real numbers by $\R$, the non-negative real numbers by $\R_{\geq 0}$, the zero matrix of appropriate dimensions as $\bm{0}$. The Lie Derivative of a function $V(x)$ along a vector field $f(x)$ is denoted by $L_f V(x) := \pdiff{V}{x}(x) f(x)$. $\mathcal{C}^1$ and $\mathcal{C}^\infty$ denote the classes of functions that are continuously differentiable once and infinitely many times respectively. 
Let $R_{n}(x)$ denote a (possibly infinite) sum of multivariate monomials of at least order $n$. For example, the remainder terms of a truncated multivariate Taylor series expansion of order $n$ are denoted as $R_{n+1}(x)$.

\section{Control Lyapunov Function and Sontag-Type Control}

Consider the input affine system
\begin{align} \label{eq:input_affine_system}
	\dot{x} = f(x) + G(x)u
\end{align}
with $x \in \R^{n}$, $u \in \R^{m}$ and $G(x) \in \R^{n \times m}$, and $f(\bm{0}) = \bm{0}$. Assume that $f$ and $G$ are $\mathcal{C}^\infty$. Denote $\pdiff{f}{x}|_{x=\bm{0}} = A$, $G(\bm{0}) = B$.
In order to derive stability results for this system, we consider the following two definitions of a CLF:
\begin{definition}[CLF \cite{Freeman1996RobustNonlinearControlDesign, Sepulchre1997}] \label{def:CLF_definition}
	A Control Lyapunov Function (CLF) $V$ for system \eqref{eq:input_affine_system} is a $\mathcal{C}^1$, positive definite, radially unbounded function: $V: \R^n \rightarrow \R$ such that
	\begin{align} \label{eq:CLF_condition}
	\Inf_{u} \left[ L_f V(x) + L_G V(x) u \right] < 0,
	\end{align}
	for all $x \neq \bm{0}$.
\end{definition}

\begin{definition}[local CLF] \label{def:local_CLF_definition}
	A local CLF $V$ is a function which satisfies the properties of Definition \ref{def:CLF_definition}, but the CLF inequality \eqref{eq:CLF_condition} holds in an open region $\mathcal{D} \subset \R^n$ containing $\bm{0}$.
\end{definition}



Following the original notation by E. D. Sontag, define $a(x) = L_f V(x)$, and $b(x) = L_G V(x)$, and 
consider the following \emph{Sontag-type control law} \cite{Primbs1998CLFBasedRHC,Sackmann2000ModifiedOptimalControl}
\begin{align} \label{eq:sontag_sackmann}
u_S(x) =
\begin{cases}
- \inv{R} \trans{b(x)} \lambda(x) & b(x) \neq \bm{0}\\
\bm{0} & b(x) = \bm{0}
\end{cases}
\end{align}
with the scalar factor $\lambda(x)$ given by
\begin{align} \label{eq:lambda}
	\lambda(x) = \frac{a(x) + \sqrt{ a^2(x) + \trans{x} Q x \ b\left(x\right) \inv{R} \trans{b}\left(x\right) }}{b\left(x\right) \inv{R} \trans{b}\left(x\right)},
\end{align}
where $Q, R \succ 0$ are the weighting matrices. This control law remains continuous in $x$ within the domain where \eqref{eq:CLF_condition} holds \cite{Sackmann2000ModifiedOptimalControl}. It achieves
\begin{align} \label{eq:lyap_decay_sontag}
	\diff{V}{t} = \begin{cases}
		- \sqrt{a^2(x)+  \trans{x} Q x b\left(x\right) \inv{R} \trans{b}\left(x\right) }, & b(x) \neq \bm{0}\\
		a(x) & b(x) = \bm{0}
	\end{cases}
\end{align}
which is negative definite, given that $V$ satisfies the CLF property \eqref{eq:CLF_condition}.
The control law \eqref{eq:sontag_sackmann} is of particular interest because it can be shown to be optimal w.r.t. the following cost functional \cite{Sackmann2000ModifiedOptimalControl, Sepulchre1997, Lohmann2023DiscussionOnNonlinQCAndSontagsFormula}
\begin{align} \label{eq:inverse_optimal_cost}
	J\left(x\left(t_0\right), u\left(\cdot\right)\right) = \frac{1}{2} \int_{t_0}^{\infty} \frac{1}{\lambda(x)} \left( \trans{x} Q x + \trans{u} R u \right) d \tau,
\end{align}
referred to as an \emph{inverse optimal} property, as shown in \cite{Freeman1996CLFNewIdeasFromAnOldSource,Lohmann2023DiscussionOnNonlinQCAndSontagsFormula}.
The cost \eqref{eq:inverse_optimal_cost} is similar to the \emph{standard quadratic cost}
\begin{align} \label{eq:standard_optimal_cost}
	J\left(x\left(t_0\right), u\left(\cdot\right)\right) = \frac{1}{2} \int_{t_0}^{\infty} \left( \trans{x} Q x + \trans{u} R u \right) d \tau.
\end{align}
but is ‘distorted’ by the state-dependent factor $\frac{1}{\lambda(x)}$ \cite{Lohmann2023DiscussionOnNonlinQCAndSontagsFormula}. If $V$ satisfies the CLF condition \eqref{eq:CLF_condition}, then $\lambda(x) > 0$ for $x \neq \bm{0}$ \cite{Sackmann2000ModifiedOptimalControl}.

The following sections aim at finding CLFs in such a way that $\lambda(x)$ becomes $1$, and thus standard quadratic optimality is recovered, at least locally. 

\section{Local CLF from LQR Design}


Assume the linearized system $\dot{x} = Ax + Bu$ with $(A,B)$ stabilizable and that we have designed an \emph{LQR feedback} (see \cite{Anderson1989}) for it, minimizing the standard cost \eqref{eq:standard_optimal_cost} for chosen weighting matrices $Q, R \succ 0$. The corresponding value function is
\begin{align}
	\label{eq:CLFCandidateFromLQR}
	V\left(x\right) = \frac{1}{2} x^\top P x,
\end{align}
where $P$ is the unique positive definite solution of the corresponding algebraic Riccati equation:
\begin{align} \label{eq:ARE}
	\trans{A}P + PA - PB\inv{R}\trans{B}P+Q = \bm{0}.
\end{align}
We propose to use this value function $V$ from \eqref{eq:CLFCandidateFromLQR} as a candidate CLF for the nonlinear control design \eqref{eq:sontag_sackmann}. Since $P$ is positive definite the LQR locally stabilizes the origin, $V$ is at least guaranteed to be a local CLF according to Definition \ref{def:local_CLF_definition}.

Let us now show that the proposed constructive design is locally optimal in the sense that it recovers the LQR feedback
\begin{align}
	u(x) &= -\inv{R}\trans{B}P x \label{eq:LQR}
\end{align}
at the equilibrium. This is done with the following novel proposition.
\begin{proposition} \label{prop:local_optimality}
	For an LTI system $\dot{x} = Ax + Bu$, the Sontag-type control law \eqref{eq:sontag_sackmann} reduces to the LQR feedback $u = - \inv{R} \trans{B} P x$ if the CLF $V(x) = \frac{1}{2} \trans{x} P x$ is used, where $P$ is the unique positive definite solution to the corresponding Algebraic Riccati equation \eqref{eq:ARE} for $Q, R \succ 0$.
\end{proposition}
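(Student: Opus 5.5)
The plan is to show that $\lambda(x)=1$ wherever $b(x)\neq\bm{0}$, so that \eqref{eq:sontag_sackmann} collapses to $-\inv{R}\trans{b(x)}$, and then to identify this with the LQR gain. For the LTI system and $V(x)=\frac12\trans{x}Px$ we have $\pdiff{V}{x}=\trans{x}P$. Hence
\begin{align*}
a(x)=\trans{x}PAx=\tfrac12\trans{x}(\trans{A}P+PA)x, \qquad b(x)=\trans{x}PB .
\end{align*}
It follows that $\trans{b(x)}=\trans{B}Px$. Writing $s(x):=b(x)\inv{R}\trans{b(x)}=\trans{x}PB\inv{R}\trans{B}Px\ge 0$, we have $-\inv{R}\trans{b(x)}=-\inv{R}\trans{B}Px$, which is exactly \eqref{eq:LQR}. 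So everything reduces to showing $\lambda(x)=1$.

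To evaluate $\lambda$, we use the Riccati equation \eqref{eq:ARE}. Multiplying it by $\trans{x}$ from the left and by $x$ from the right, and halving, gives
\begin{align*}
a(x)=\tfrac12\bigl(s(x)-\trans{x}Qx\bigr).
\end{align*}
Substituting this into the radicand of \eqref{eq:lambda} yields
\begin{align*}
a^2+\trans{x}Qx\,s=\tfrac14\bigl(s-\trans{x}Qx\bigr)^2+\trans{x}Qx\,s=\tfrac14\bigl(s+\trans{x}Qx\bigr)^2 .
\end{align*}
Since $Q\succ0$ and $s\ge0$, the quantity $s+\trans{x}Qx$ is nonnegative, so the square root equals $\tfrac12(s+\trans{x}Qx)$ without any sign ambiguity. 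The numerator of \eqref{eq:lambda} is then $\tfrac12(s-\trans{x}Qx)+\tfrac12(s+\trans{x}Qx)=s$. Therefore $\lambda(x)=s/s=1$ whenever $b(x)\neq\bm{0}$, because in that case $s>0$ by $R\succ0$.

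The only remaining point is the case $b(x)=\bm{0}$, where \eqref{eq:sontag_sackmann} prescribes $u=\bm{0}$. There $\trans{B}Px=\bm{0}$, so the LQR input $-\inv{R}\trans{B}Px$ is also $\bm{0}$, and the two laws coincide on all of $\R^n$. I expect the main step to be recognizing the perfect square in the radicand via the ARE identity. Once that is spotted, the argument is routine bookkeeping; the only care needed is the sign of the square root, which positivity of $Q$ settles. As a by-product, the cost \eqref{eq:inverse_optimal_cost} becomes the standard cost \eqref{eq:standard_optimal_cost}, and $V$ is a global CLF for the linear system.
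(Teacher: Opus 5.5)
Your proof is correct and follows the paper's argument: compute $a(x)=\trans{x}PAx$ and $b(x)=\trans{x}PB$, use the ARE to write $a=\tfrac12(s-\trans{x}Qx)$, recognize the radicand as the perfect square $\tfrac14(s+\trans{x}Qx)^2$, and conclude $\lambda(x)=1$. Your separate treatment of the branch $b(x)=\bm{0}$ is more careful than the paper. The paper claims the denominator $\trans{x}PB\inv{R}\trans{B}Px$ is always positive, but it vanishes on $\ker(\trans{B}P)\setminus\{\bm{0}\}$ whenever $m<n$, and there, as you note, both laws give $u=\bm{0}$.
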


\begin{proof}
	Substituting the linear system $f(x) = Ax$, $g(x) = B$ and the CLF choice \eqref{eq:CLFCandidateFromLQR}, the expressions in feedback \eqref{eq:sontag_sackmann} become
	\begin{subequations} \label{eq:a_b_expressions}
	\begin{align}
		a = \pdiff{V}{x} f &= \trans{x} PA x,\\
		b = \pdiff{V}{x} G &= \trans{x} PB.
	\end{align}
	\end{subequations}
	
	Using these expressions, the controller \eqref{eq:sontag_sackmann} becomes
	\begin{align}
	u(x) = - \lambda(x) \inv{R}\trans{B} P x.
	\end{align}
	This controller is equal to the LQR feedback multiplied by a state-dependent scalar factor $\lambda(x)$ as in \eqref{eq:lambda}. In the LTI case with the given CLF $V$, this factor becomes
	\begin{align} \label{eq:lambda_lti}
	\lambda(x) = \frac{\trans{x}PAx + \sqrt{\left(\trans{x}PAx\right)^2 + \trans{x}Qx \trans{x}PB\inv{R}\trans{B}Px }}{\trans{x}PB\inv{R}\trans{B}Px}.
	\end{align}
	Recall that the symmetric positive definite matrix $P$ satisfies the Algebraic Riccati equation (ARE) \eqref{eq:ARE} \cite{Anderson1989}.
	Multiplying the ARE \eqref{eq:ARE} by $(-1)$, then $\trans{x}$ from the left and $x$ from the right yields
	\begin{align} \label{eq:quadratic_form_xTPAx}
	\trans{x}PAx = \frac{1}{2} \trans{x}PB\inv{R}\trans{B}Px - \frac{1}{2}\trans{x}Qx.
	\end{align}
	Substituting \eqref{eq:quadratic_form_xTPAx} into \eqref{eq:lambda_lti} gives
	\begin{align} \label{eq:linear_optimality}
	\begin{split}
	\lambda(x) &= \frac{\trans{x}PAx + \sqrt{\left(\frac{1}{2}\trans{x}PB\inv{R}\trans{B}Px + \frac{1}{2} \trans{x}Qx\right)^2}}{\trans{x}PB\inv{R}\trans{B}Px}\\
	&\overset{\eqref{eq:quadratic_form_xTPAx}}{=} 1
	\end{split}
	\end{align}
	Since the system is stabilizable, the CLF condition holds, and hence the denominator in \eqref{eq:linear_optimality} is always positive (and thus nonzero) \cite{Sackmann2000ModifiedOptimalControl}.\\
	\eqref{eq:linear_optimality} thus shows that the LQR is recovered by the proposed controller at the point of linearization.
\end{proof}
The quadratic choice of CLF \eqref{eq:CLFCandidateFromLQR} is particularly appealing because it is constructive and it allows locally optimal nonlinear feedback as in Proposition \ref{prop:local_optimality} through the proposed controller \eqref{eq:sontag_sackmann} by approximating the quadratic part of the value function. It is also easy to adjust by tuning the weighting matrices $Q$ and $R$.

\begin{remark}
	This result complements discussions on local optimality of Sontag-type formulas e.g. in \cite{Primbs1999Diss} and to the best of our knowledge, is the first proof of this conjecture in \cite{Primbs1999Diss}.
\end{remark}

\begin{remark}
	If the CLF is of the form $V(x) = \frac{1}{2} x^\top P x + R_3(x)$, the result still holds locally, showing that aligning the quadratic part of the CLF to the value function is crucial for local performance.
\end{remark}

\begin{remark}
	A note on relations to the nonlinear feedback minimizing the standard cost \eqref{eq:standard_optimal_cost} by solving the Hamilton-Jacobi-Bellman equation can be found in the Appendix.
\end{remark}

Note that the CLF-dependent cost \eqref{eq:inverse_optimal_cost} is minimized throughout the whole domain where the CLF condition \eqref{eq:CLF_condition} holds \cite{Lohmann2023DiscussionOnNonlinQCAndSontagsFormula}.



Let us now have a look at the ROA. Since both the LQR and the Sontag-type approach with a local CLF only offer \emph{local} stability guarantees, it might be insightful to compare the ROA they generate around the origin.
In general, there is no guarantee on the size of the resulting ROA when applying linear controllers to smooth nonlinear systems \cite{Khalil2002NonlinearSystems}.
The following (novel) proposition shows that at least for a local subset of the ROA generated by the LQR, this set is also a subset of the ROA generated by the Sontag-type controller. The proof uses a Taylor series expansion and thus the discussed regions might be arbitrarily small.

\begin{proposition} \label{prop:ROA}
	Assume $(A,B)$ stabilizable.
	There exists a compact sublevel set of $V$ as in \eqref{eq:CLFCandidateFromLQR} and some $C \in \R_{\geq 0}$ s.t. the following set is non-empty:
	\begin{align} \label{eq:X_ROA_LQR}
		\mathbb{X}_{\mathrm{ROA,L}} := \left\{ x \in \mathbb{X} | x \neq \bm{0}, \; V(x) \leq C, \; \dot{V}(x) < 0 \right\},
	\end{align}
	where $\dot{V}$ is calculated under the LQR \eqref{eq:LQR}. $\mathbb{X}_{\mathrm{ROA,S}}$ denotes the analogous set with $\dot{V}$ under the Sontag-type controller.
	Then it follows that
	$\mathbb{X}_{\mathrm{ROA,L}} \subset \mathbb{X}_{\mathrm{ROA,S}}$.
\end{proposition}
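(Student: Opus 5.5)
The plan is to compare, pointwise on a small sublevel set of $V$, the Lyapunov derivatives produced by the two controllers. The key fact is that the Sontag-type controller attains the decay \eqref{eq:lyap_decay_sontag}. Whenever $b(x)\neq\bm{0}$ and $\dot V_L(x)<0$, the Sontag decay $-\sqrt{a^2+\trans{x}Qx\,b\inv{R}\trans{b}}$ is strictly negative. So the only way $x\in\mathbb{X}_{\mathrm{ROA,L}}$ can fail to lie in $\mathbb{X}_{\mathrm{ROA,S}}$ is at points with $b(x)=\bm{0}$. There the Sontag controller gives $u=\bm{0}$, so $\dot V_S = a(x)$. The LQR gives
\[
\dot V_L = a(x) - b(x)\inv{R}\trans{B}Px .
\]

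The argument then proceeds in three steps.

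\textbf{Step 1 (existence of $C$).} By stabilizability, $P\succ0$ exists. Taylor-expanding the nonlinear closed loop under the LQR \eqref{eq:LQR}, and using the ARE \eqref{eq:ARE}, gives
\[
\dot V_L = -\tfrac12\trans{x}(Q+PB\inv{R}\trans{B}P)x + R_3(x).
\]
Since the quadratic part is negative definite, there is $C>0$ such that on the compact set $\{V\le C\}$ we have $\dot V_L<0$ for $x\neq\bm{0}$. This makes $\mathbb{X}_{\mathrm{ROA,L}}$ non-empty, and in fact equal to $\{0<V\le C\}$.

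\textbf{Step 2 (points with $b\neq\bm{0}$).} For $x\in\mathbb{X}_{\mathrm{ROA,L}}$ with $b(x)\neq\bm{0}$, the square-root term is positive because $\trans{x}Qx>0$ and $b\inv{R}\trans{b}>0$. Hence $\dot V_S<0$ directly.

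\textbf{Step 3 (points with $b=\bm{0}$).} At such points $\dot V_S=a(x)$ and $\dot V_L = a(x) - b(x)\inv{R}\trans{B}Px$. Here one must use that $b(x)=\trans{x}PG(x)$ is the exact Lie derivative, whereas the LQR term uses $\trans{x}PB$; these differ by $\trans{x}P(G(x)-B)=R_2(x)$. Since $b(x)=\bm{0}$, the correction term vanishes to the order that matters, and one obtains $\dot V_S - \dot V_L = b(x)\inv{R}\trans{B}Px$.

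More carefully, $\dot V_L = a(x) - \trans{x}PG(x)\inv{R}\trans{B}Px$, since the LQR input enters through $G(x)$. The term $\trans{x}PG(x)$ is exactly $b(x)$, which is zero here. So $\dot V_S=\dot V_L<0$ at these points, and the inclusion follows.

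The main obstacle is bookkeeping in Step 3: making sure the LQR input is multiplied by $G(x)$ rather than $B$ in the nonlinear closed loop, so that the cross term is literally $b(x)$. If that is set up correctly, the case $b=\bm{0}$ reduces to $\dot V_S=\dot V_L$.

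The cleanest unified formulation writes $\dot V_L = a - b\inv{R}\trans{B}Px$ for all $x$. When $b\neq\bm{0}$, $\dot V_L<0$ forces $a < b\inv{R}\trans{B}Px$, and the Sontag decay is negative regardless. Combining both cases gives $\mathbb{X}_{\mathrm{ROA,L}}\subset\mathbb{X}_{\mathrm{ROA,S}}$ on the same sublevel set $V\le C$.
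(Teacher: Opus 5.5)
Your proof is correct and follows the paper's route. Both use the same Taylor expansion of $\dot V$ under the LQR to get a nonempty punctured sublevel set. Your case split on $b(x)$ then spells out the paper's one-line step: $\dot V_L<0$ means the LQR input witnesses $\inf_u[a+bu]<0$, i.e.\ the CLF inequality \eqref{eq:CLF_condition} holds at $x$, so the Sontag decay \eqref{eq:lyap_decay_sontag} is negative there. You can drop the first paragraph of your Step 3, which compares $\trans{x}PG(x)$ with $\trans{x}PB$; since the LQR input enters through $G(x)$, $\dot V_L = a - b\inv{R}\trans{B}Px$ holds exactly, which is your ``more carefully'' version.
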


\begin{proof}
	We first show that a nonempty state space region $\mathbb{X}_{\mathrm{ROA,L}}$ as in \eqref{eq:X_ROA_LQR} exists. It is then straightforward to show that the Sontag-type controller also leads to a Lyapunov decrease in the same region.
	Denoting $K = \inv{R}\trans{B}P$, we take the Taylor series expansion of the time derivative of the CLF candidate \eqref{eq:CLFCandidateFromLQR} under the LQR control law $u = -Kx$.
	\begin{align}
		\begin{split}
			\dot{V} &= \pdiff{V}{x} \left( \left( \pdiff{f}{x}|_{x=\bm{0}}-G\left(\bm{0}\right)K \right)x + R_{2}(x) \right)\\ 
			&= \trans{x}P \left(A-BK\right) x + R_{3}(x).
		\end{split}
	\end{align}
	The LQR for the stabilizable linearized system $(A,B)$ ensures that the symmetric part of the matrix $P \left(A-BK\right)$ is negative definite \cite{Anderson1989}, hence the origin is locally stable even under the original nonlinear system.
	Since $V$ is a quadratic form, there exists a region around $\bm{0}$ which is a subset of the ROA \cite{Khalil2002NonlinearSystems}.
	This shows that there exists a nonempty state space region $\mathbb{X}_{\mathrm{ROA,L}}$ \eqref{eq:X_ROA_LQR}, where $V$ satisfies the CLF condition \eqref{eq:CLF_condition}. Therefore, the Sontag-type controller \eqref{eq:sontag_sackmann} also leads to a Lyapunov decay in the same region.
\end{proof}

Proposition \ref{prop:ROA} shows that using the CLF candidate \eqref{eq:CLFCandidateFromLQR}, the provable subset of the ROA under the Sontag-type controller is always at least as large as the one under the LQR for the nonlinear system.
Since the LQR stabilizes the linear system, there exists a control input to locally satisfy the CLF inequality which in turn guarantees a local Lyapunov decrease under the Sontag-type controller.
Note that the considered regions in Proposition \ref{prop:ROA} are ellipsoidal since they are defined as sublevel sets of a positive definite quadratic form. Since the proof is based on linearization, the considered regions, however, might be arbitrarily small.\\
The steps to compute the presented control \eqref{eq:sontag_sackmann} with appropriate CLF to achieve local optimality are shown in Algorithm \ref{alg:controller_computation}.
\begin{algorithm}
	\caption{Locally Optimal Controller Computation}\label{alg:controller_computation}
	\begin{algorithmic}[1]
		\State Linearize system \eqref{eq:input_affine_system} around equilibrium $(x_\mathrm{s}=\bm{0}, u_\mathrm{s}=\bm{0})$ to obtain $(A,B)$.
		\State \textbf{Require:} $(A,B)$ stabilizable.
		\State Choose weighting matrices $Q, R \succ 0$ for LQR design.
		\State Compute Riccati matrix $P$ for given $(A,B)$ and $Q,R$. 
		\State Compute feedback \eqref{eq:sontag_sackmann} using the CLF candidate \eqref{eq:CLFCandidateFromLQR}.
	\end{algorithmic}
\end{algorithm}

\section{Global CLF from LQR and Feedback Linearization}
This section shows how the proposed controller design can directly benefit from the large ROA often generated by \emph{feedback linearization} methods.
A globally feedback-linearizable system \eqref{eq:input_affine_system} admits a global diffeomorphism into coordinates $z = T(x)$, $T(\bm{0})=\bm{0}$, where the transformed system reads
\begin{align} \label{eq:feedb_lin_system_dynamics}
	\dot{z} = \tilde{A} z + \tilde{B} \left( \psi(z) + \gamma(z) u \right),
\end{align}
with nonsingular matrix $\gamma(z)$, matrix $\psi(z)$, with $(\tilde{A}, \tilde{B})$ controllable, \cite{Khalil2002NonlinearSystems}.
A known approach to finding a global CLF for the transformed system \eqref{eq:feedb_lin_system_dynamics} is to choose a form that is quadratic in the transformed coordinates $z$ \cite{Sepulchre1997}
\begin{align}
	\label{eq:quadratic_clf_in_z}
	V(z) = \frac{1}{2} z^\top \tilde{P} z.
\end{align}
In the transformed coordinates $z$, the system can be made linear by implementing a preliminary feedback, hence the task is reduced to finding a CLF for an LTI system. When a CLF is found for this linear system, undoing the input and the state transformation thus give a global CLF and a globally stabilizing feedback for the original nonlinear system, which ensures that the CLF condition holds globally in the original coordinates $x$.
The approach in \cite{Sepulchre1997} allows a broad set of matrices in \eqref{eq:quadratic_clf_in_z} as long as they satisfy the CLF condition and does not give hints which choice might be beneficial for performance. An interesting question that arises here is whether we can choose a CLF that is quadratic in the transformed coordinates $z$ (as in \cite{Sepulchre1997}) while also aligning the level sets locally with the CLF $\frac{1}{2} x^\top P x$ from the LQR, as in Proposition \ref{prop:local_optimality}.
This leads to the following (novel) proposition.
\begin{proposition} \label{prop:feedb_lin_quadratic_clf}
	For a feedback linearizable system \eqref{eq:input_affine_system}, the Sontag-type controller \eqref{eq:sontag_sackmann} leads to local optimality as in Proposition \ref{prop:local_optimality} and to global stability given the CLF
	\begin{align} \label{eq:feedb_lin_global_clf}
		V(x) = \frac{1}{2} z^\top(x) \tilde{P} z(x),
	\end{align}
	is used, with
	\begin{align} \label{eq:feedb_lin_global_clf_matrix}
		\tilde{P} = \left(\pdiff{T}{x}|_{x=\bm{0}}\right)^{-\top} P \left( \pdiff{T}{x}|_{x=\bm{0}} \right)^{-1}.
	\end{align}
\end{proposition}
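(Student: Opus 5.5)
We split the claim into two parts: global stability, and local optimality, meaning $\lambda(\bm{0}^+)\to 1$ so that the controller's linearization is the LQR gain.

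\textbf{Global stability.} Write $z=T(x)$ and $M:=\pdiff{T}{x}|_{x=\bm{0}}$, which is nonsingular because $T$ is a diffeomorphism. The function $V$ in \eqref{eq:feedb_lin_global_clf} is $\mathcal{C}^1$ and positive definite, since $\tilde P\succ 0$ as a congruence of $P\succ0$.

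For radial unboundedness we argue as follows. $V$ is radially unbounded in $z$, and $T$ is a global diffeomorphism, hence proper; therefore $V$ is radially unbounded in $x$.

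For the CLF inequality \eqref{eq:CLF_condition}, observe that $\dot V$ is coordinate invariant. In $z$-coordinates,
\[
\dot V = z^\top\tilde P\tilde A z + z^\top \tilde P\tilde B\bigl(\psi(z)+\gamma(z)u\bigr).
\]
Because $\gamma$ is nonsingular, $L_GV(x)=z^\top\tilde P\tilde B\gamma$ vanishes exactly when $z^\top \tilde P\tilde B=\bm{0}$. We need $z^\top\tilde P\tilde A z<0$ on that set for $z\neq\bm{0}$. This is the step I expect to be the main obstacle, since $\tilde P$ is not a Riccati solution for $(\tilde A,\tilde B)$.

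To handle it, note that the linearization of \eqref{eq:feedb_lin_system_dynamics} is $\tilde A+\tilde B\,\partial\psi(\bm 0)$ with input matrix $\tilde B\gamma(\bm0)$. By the similarity relations $A=M^{-1}(\tilde A+\tilde B\psi'(\bm0))M$ and $B=M^{-1}\tilde B\gamma(\bm0)$, the matrix $\tilde P$ solves the ARE of this linearization with weights $M^{-\top}QM^{-1}$ and $R$. Hence on $\{z^\top\tilde P\tilde B=\bm 0\}$ we get
\[
z^\top\tilde P\bigl(\tilde A+\tilde B\psi'(\bm0)\bigr)z = z^\top\tilde P\tilde A z = -\tfrac12 z^\top M^{-\top}QM^{-1}z<0 .
\]
The $\psi$ terms drop out because they are premultiplied by $z^\top\tilde P\tilde B=\bm 0$. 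The same cancellation removes the nonlinear $\psi(z)$ term globally. So the CLF condition holds for all $z\neq\bm{0}$, i.e.\ for all $x\neq\bm 0$.

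Since $\dot V$ under \eqref{eq:sontag_sackmann} is then negative definite by \eqref{eq:lyap_decay_sontag}, and $V$ is radially unbounded, global asymptotic stability follows from standard Lyapunov arguments.

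\textbf{Local optimality.} Expanding $T(x)=Mx+R_2(x)$ gives
\[
V(x)=\tfrac12 x^\top M^\top\tilde P M x+R_3(x)=\tfrac12x^\top Px+R_3(x).
\]
We then appeal to the Remark after Proposition \ref{prop:local_optimality}. Concretely, $a(x)=x^\top PAx+R_3(x)$ and $b(x)=x^\top PB+R_2(x)$. In \eqref{eq:lambda}, numerator and denominator are homogeneous of degree four to leading order, with leading parts exactly those of \eqref{eq:lambda_lti}. Using \eqref{eq:quadratic_form_xTPAx}, the leading part of $\lambda$ is identically $1$, so $\lambda(x)=1+O(\|x\|)$. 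This limit is taken along directions where the leading denominator $x^\top PBR^{-1}B^\top Px$ does not vanish. Where it does vanish, the control is small anyway, and continuity of $u_S$ near the origin \cite{Sackmann2000ModifiedOptimalControl} should handle it. Consequently $u_S(x)=-R^{-1}B^\top Px+O(\|x\|^2)$, which recovers the LQR at the equilibrium, and the inverse-optimal cost \eqref{eq:inverse_optimal_cost} reduces locally to \eqref{eq:standard_optimal_cost}.
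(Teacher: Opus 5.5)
Your proof is correct, and for the global part it takes a genuinely different route from the paper.

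The paper's route is local-to-global. It first invokes Proposition~\ref{prop:ROA} to get the CLF inequality on a neighbourhood of the origin. It then absorbs $\psi$ through the input transformation $u=\gamma^{-1}(z)(-\psi(z)+v)$. Finally it uses homogeneity ($\alpha(z)$ quadratic, $\beta(z)=z^\top\tilde P\tilde B$ linear) to extend the implication $\beta(z)=\bm 0\Rightarrow\alpha(z)<0$ from a small ball to all of $\R^n$.

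Your route is a direct certificate. You observe that $\tilde P$ is the Riccati solution for the $z$-linearization $(\tilde A+\tilde B\psi'(\bm 0),\,\tilde B\gamma(\bm 0))$ with weight $M^{-\top}QM^{-1}$. This gives the explicit identity $z^\top\tilde P\tilde A z=-\tfrac12 z^\top M^{-\top}QM^{-1}z$ on $\{z^\top\tilde P\tilde B=\bm 0\}$ for every $z$, so the CLF inequality holds globally in one step.

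What your route buys:
\begin{itemize}
\item It is self-contained and quantitative, with no local-to-global scaling.
\item It avoids a step the paper leaves implicit: Proposition~\ref{prop:ROA} is stated for $\tfrac12x^\top Px$, not for the new $V$.
\item You also verify radial unboundedness via properness of $T$, which the paper omits.
\end{itemize}
What the paper's route buys: it is more structural. It shows that any $V$ quadratic in $z$ that satisfies the CLF inequality near the origin is automatically a global CLF, regardless of how $\tilde P$ was chosen.

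For local optimality you follow the paper's idea: match the quadratic parts, then appeal to Proposition~\ref{prop:local_optimality}. However, your treatment of directions where $B^\top Px=\bm 0$ (``continuity should handle it'') is not an argument, because continuity of $u_S$ says nothing about its linearization. The gap closes easily:
\begin{itemize}
\item Wherever $b\neq\bm 0$, rationalizing gives $\lambda=x^\top Qx/\bigl(\sqrt{a^2+x^\top Qx\,bR^{-1}b^\top}-a\bigr)$.
\item Set $s=\tfrac12\bigl(x^\top PBR^{-1}B^\top Px+x^\top Qx\bigr)\ge\tfrac12\lambda_{\min}(Q)\|x\|^2$. Then the radicand equals $s^2+O(\|x\|^5)$.
\item Hence the denominator is $x^\top Qx+O(\|x\|^3)$, and $\lambda=1+O(\|x\|)$ uniformly in direction.
\end{itemize}
This gives $u_S(x)=-R^{-1}B^\top Px+O(\|x\|^2)$ everywhere near the origin, including on $\{b=\bm 0\}$. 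A minor slip: the numerator and denominator of $\lambda$ have degree two, not four; only the radicand is quartic.
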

\begin{proof}
	Given the CLF \eqref{eq:feedb_lin_global_clf}, the choice of $\tilde{P}$ leading to local optimality for the resulting Sontag-type controller is to align it with \eqref{eq:CLFCandidateFromLQR} locally. Replacing $z$ by its Taylor series expansion in \eqref{eq:quadratic_clf_in_z} and demanding equality of \eqref{eq:quadratic_clf_in_z} up to the quadratic parts to \eqref{eq:CLFCandidateFromLQR} leads to
	\begin{align}
		\label{eq:aligning_clfs}
		\begin{split}
		V(x) &= \frac{1}{2} x^\top \left( \pdiff{T}{x}|_{x=\bm{0}} \right)^\top \tilde{P} \left( \pdiff{T}{x}|_{x=\bm{0}} \right) x  + R_3(x)\\ &\overset{!}{\approx} \frac{1}{2} x^\top P x.
		\end{split}
	\end{align}
	The obvious choice for $\tilde{P}$ aligning the quadratic parts in \eqref{eq:aligning_clfs} is \eqref{eq:feedb_lin_global_clf_matrix}.
	Since $P \succ 0$ and $T$ is a diffeomorphism, $\tilde{P} \succ 0$ holds.
	
	Proposition \ref{prop:ROA} ensures that there exists a compact sublevel set $\mathbb{X}_{\mathrm{ROA,L}}$ of $V$ as in \eqref{eq:X_ROA_LQR} where the CLF condition \eqref{eq:CLF_condition} holds. 
	Applying the diffeomorphism to transform into feedback-linearized coordinates $z = T(x)$ and using the input transformation
	\begin{align} \label{eq:feedb_lin_input_transf}
		u = \gamma^{-1}(z) \left( -\psi(z) + v \right)
	\end{align}
	leads to a transformed CLF condition
	\begin{align} \label{eq:feedb_lin_inf_in_z}
		\inf_{v, z \in \mathbb{Z}_{\mathrm{ROA,L}}} \left\{ \pdiff{V}{z} \left(\tilde{A}z + \tilde{B} v \right) \right\} < 0,
	\end{align}
	fulfilled in the transformed set
	\begin{align}
		\mathbb{Z}_{\mathrm{ROA,L}} = \left\{ T(x) | x \in \mathbb{X}_{\mathrm{ROA,L}} \right\}.
	\end{align}
	Expanding the terms in \eqref{eq:feedb_lin_inf_in_z} leads to the following inequality:
	\begin{align} \label{eq:feedb_lin_clf_cond_local}
		\inf_{v, z \in \mathbb{Z}_{ROA,L}} \left\{ \underbrace{\frac{1}{2} \trans{z} \left( \trans{\tilde{A}} \tilde{P} + \tilde{P} \tilde{A} \right) z}_{:= \alpha(z)} + \underbrace{\trans{z} \tilde{P} \tilde{B}}_{:= \beta(z)} v \right\} < 0.
	\end{align}
	Since \eqref{eq:feedb_lin_clf_cond_local} holds in an open region around the origin, a simple case distinction whether or not $\alpha(z)$ is negative definite shows that \eqref{eq:feedb_lin_clf_cond_local} necessarily also holds globally for all $z$. Consider the non-trivial case where $\alpha(z)$ is non-negative definite. 
	Since the whole expression \eqref{eq:feedb_lin_clf_cond_local} holds in an open region around the origin, all the directions in which $\alpha(z)$ is not smaller than $0$, $\beta(z) \neq 0$ holds necessarily. Since $\alpha(z)$ is a quadratic form, and $\beta(z)$ is linear, this also holds globally.
\end{proof}

Note that the transformed coordinates $z$ are only used in the construction of the CLF \eqref{eq:quadratic_clf_in_z}. The control design \eqref{eq:sontag_sackmann} is still done in the original coordinates $x$.
Compared to the related CLF design in \cite{Sepulchre1997}, constructing the CLF in Proposition \ref{prop:feedb_lin_quadratic_clf} can be done without knowledge of the transformed system \eqref{eq:feedb_lin_system_dynamics} and a solution to a matrix-valued inequality.

Proposition \ref{prop:feedb_lin_quadratic_clf} assumes that the feedback linearization can be done globally. For many systems this property holds only locally meaning that the resulting CLF will be limited to the same region.

\section{Numerical Results} \label{sec:numerical_results}
To validate the proposed control design, we perform numerical simulations on the well-known inverted pendulum system \cite{Khalil2002NonlinearSystems}. The state vector consists of the angle deviation $\theta$ from the upper equilibrium and its derivative. The input is the cart's acceleration $u = \ddot{y}$.
\begin{align}
	\label{eq:inverted_pendulum_ode}
	\begin{bmatrix}
		\dot{\theta}\\
		\ddot{\theta}
	\end{bmatrix}
	=
	\begin{bmatrix}
		\dot{\theta}\\
		\frac{mgL \sin\left(\theta\right)}{(J+mL^2)}
	\end{bmatrix}
	+
	\begin{bmatrix}
		0\\
		\frac{- mL \cos\left(\theta\right)}{(J+mL^2)}
	\end{bmatrix}
	u
\end{align}
All simulations are performed using a fixed-step 4th order Runge-Kutta integration scheme with a step size of $\SI{0.01}{\s}$.

The code used generate the controllers and simulation results is publicly accessible \footnote{\url{https://github.com/TUMRT/locally_optimal_sontag_type_control}}.


For a feedback linearizable system \eqref{eq:input_affine_system}, four designs can be made for achieving local optimality:
\begin{itemize}
	\item Design (i): Sontag-type controller with local CLF from LQR design (see Algorithm \ref{alg:controller_computation}).
	\item Design (ii): Sontag-type controller as in Design (i) but replace the CLF candidate \eqref{eq:CLFCandidateFromLQR} in Step $5$ in Algorithm \ref{alg:controller_computation} by \eqref{eq:feedb_lin_global_clf}, as in Proposition \ref{prop:feedb_lin_quadratic_clf}.
	\item Design (iii): Design a controller purely based on feedback linearization \eqref{eq:feedb_lin_input_transf} and choose the linear in $z$ feedback $v(x) = - K z(x)$ s.t. $\pdiff{u}{x}|_{x=\bm{0}} = -\inv{R}\trans{B}P$, to ensure local optimality.
	\item Design (iv): LQR for the linearized system around the equilibrium.
\end{itemize}
All designs assume the same weighting matrices $Q,R \succ 0$ and all require the corresponding Riccati matrix $P$ from the LQR design at the equilibrium.

Note that the inverted pendulum system \eqref{eq:inverted_pendulum_ode} is already of the form \eqref{eq:feedb_lin_system_dynamics}, and therefore Design (i) and Design (ii) are identical.
However, since the assumption $\gamma(x)$ invertible in Proposition \ref{prop:feedb_lin_quadratic_clf} does not hold at $\theta = \pm \frac{\pi}{2}$, the results of Proposition \ref{prop:feedb_lin_quadratic_clf} are also valid only when the initial angle satisfies $\theta_0 \in \left(-\frac{\pi}{2}, \frac{\pi}{2}\right)$.

For the presented example, all designs are done using the same weighting matrices $Q=I, \, R=1$.
Design (iii) is another approach giving local optimality and large ROA for feedback linearizable systems and is only used for comparison.

\begin{figure}[h]
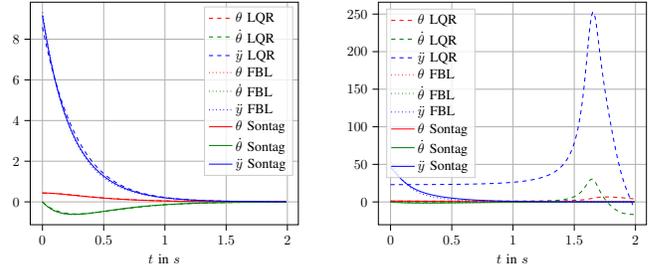

\centering
\begin{subfigure}{0.25\textwidth}
	\scalebox{0.52}{\input{fig/simple_sim_res_25deg.tex}}
\end{subfigure}%
\begin{subfigure}{0.25\textwidth}
	\scalebox{0.52}{\input{fig/simple_sim_res_67deg.tex}}
\end{subfigure}
\caption{Closed loop responses of the inverted pendulum \eqref{eq:inverted_pendulum_ode} with LQR (Design (iv)), locally optimal feedback linearization-based (FBL, Design (iii)) and Sontag-type controller (Design (i)) respectively to different initial angles.}
\label{fig:cl_behavior}
\end{figure}
Figure \ref{fig:cl_behavior} shows states and input trajectories of the closed-loop system for different initial conditions $x_0 = \begin{bmatrix}
	\theta_0, 0
\end{bmatrix}^\top$ with $\theta_0 \in \{25^{\circ}, 67^{\circ}\}$. 
While for the smaller initial angle displacement, the behavior is very similar for all controllers, the LQR fails to stabilize the upper equilibrium for the larger initial angle displacement while the nonlinear controllers are still able to stabilize it.

\begin{figure}[h]
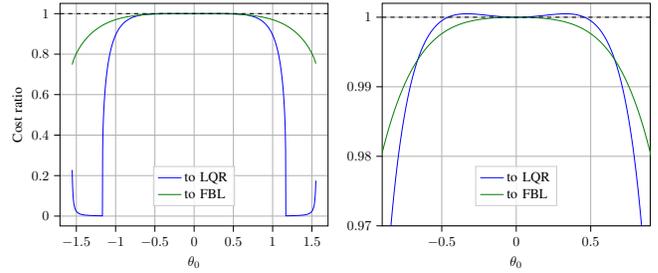

	\centering
	\begin{subfigure}{0.25\textwidth}
		\scalebox{0.52}{\input{fig/simple_PerformanceComparison1D_ratio.tex}}
	\end{subfigure}%
	\begin{subfigure}{0.25\textwidth}
		\scalebox{0.52}{\input{fig/simple_PerformanceComparison1D_ratio_zoom.tex}}
	\end{subfigure}
	\caption{Quadratic performance index ratio of the Sontag-type controller to the LQR (blue) and the controller based on feedback linearization (FBL, green). The plot on the right shows a zoomed in version of the same curves.}
	\label{fig:performance_comparison_ratio}
\end{figure}

A quantitative performance comparison is shown in Figure \ref{fig:performance_comparison_ratio}, where the discrete-time quadratic performance index
\begin{align}
	J(\theta_0) = \frac{h}{2} \sum_{k=0}^{1499} \trans{x_k} Q x_k + \trans{u_k} R u_k
\end{align}
is shown for simulations over the initial angle displacement $\theta_0$. This represents a zero-order hold discretization of the standard quadratic cost \eqref{eq:standard_optimal_cost}.
For the comparison, a total simulation time of \SI{15}{\s} each is used for all simulations.\\
A complete simulation is performed for both the LQR (Design (iv)), the feedback linearization-based controller (Design (iii)) and the Sontag-type controller with a quadratic CLF (Design (i)/(ii)) as inputs $u_k$, are performed for $1000$ equidistant initial angles $\theta_0 \in \{0^{\circ}, 89^{\circ}\}$. The achieved closed-loop cost for the Sontag-type controller $J_{\mathrm{S}}(\theta_0)$ is divided by the cost achieved by the LQR and the feedback linearizing controller, giving the \emph{Cost ratio to LQR} and \emph{Cost ratio to FBL} respectively as shown in Figure \ref{fig:performance_comparison_ratio}. This represents a relative performance index over the initial angle $\theta_0$ with the Sontag-type controller achieving a better cost if this relative performance index is smaller than $1$.\\
It can be seen that for small initial angle displacements the LQR yields a very similar but sometimes marginally better performance. The Sontag-type controller gives a noticeably smaller cost for larger initial angles as the LQR fails to stabilize the equilibrium once the initial angle displacement becomes larger, indicating a larger ROA for the Sontag-type controller.
Towards the margins of the plot, the Sontag-type controller's cost rises as it also approaches the edge of its ROA. The reason is that the CLF condition fails to hold at $\theta = \pm \frac{\pi}{2}$.
Compared to the locally optimal controller based on feedback linearization (FBL, Design (iii)), the Sontag-type controller consistently gives better performance since it does not necessarily compensate beneficial nonlinearities. It is worth noting that there is no straightforward way of further improving the performance of the feedback-linearization-based controller. For the Sontag-type controller, a different CLF more closely approximating the value function might further improve performance (cf. the appendix).

\section{Discussion}

The presented design of Sontag-type controller with quadratic LQR-based CLF is widely applicable. It is able to stabilize equilibria of nonlinear multiple-input systems given the linearization about them is stabilizable and a suitable quadratic cost is chosen. Under these conditions, the feedback is locally optimal i.e. it recovers the LQR at the equilibrium. We have shown that the provable local subset of the ROA of this Sontag-type controller is larger or equal than the one under the LQR serving as the basis of this design. Since the local optimality is based on the replication of the quadratic part of the value function, it seems plausible that another approximation of the value function including higher-order terms but keeping the quadratic part as discussed might improve performance away from the equilibrium or at least enlarge the ROA while keeping local optimality.
For feedback linearizable systems, we have shown how to choose a CLF that leads to a globally stable controller that exhibits the same local optimality as the design above. This design combines these desirable properties which are usually sought using mutually exclusive control designs.
The presented easy way to recover local optimality with the Sontag-type controller uses the LQR design to give a CLF matching the quadratic part of the value function. This is less straightforward for non-quadratic costs.




\section*{APPENDIX}


This section shows the close relation that the Sontag-type controller \eqref{eq:sontag_sackmann} has to optimal control via the choice of CLF. The control law \eqref{eq:sontag_sackmann} is shown to yield the optimal controller, i.e. one satisfying the Hamilton-Jacobi-Bellman equation (HJBE) \cite{Lewis2012} given the value function is used as a CLF during the design, i.e. $V(x) = V^*(x)$. Denote $a^*(x) = \pdiff{V^*}{x}f(x)$, $b^*(x) = \pdiff{V^*}{x}G(x)$. Assume that there exists an optimal feedback $u^*(x)$ satisfying the HJBE, which therefore takes the form \cite{Sepulchre1997}
\begin{align} \label{eq:optimal_feedback}
	u^*(x) = -\inv{R} \trans{{b^*}}(x).
\end{align}
Substituting the optimal control law \eqref{eq:optimal_feedback} into the HJBE gives
\begin{align} \label{eq:optimal_clf_a}
	0 = \frac{1}{2}\trans{x}Qx + a^*(x)  - \frac{1}{2} b^*(x) \inv{R} \trans{{b^*}}(x).
\end{align}

Rearranging \eqref{eq:optimal_clf_a} for $a^*(x)$ and substituting $a^*(x)$ and $b^*(x)$ into the Sontag-type control law \eqref{eq:sontag_sackmann} yields a distortion factor $\lambda(x) \equiv 1$. This shows that the Sontag-type controller is identical to the optimal feedback \eqref{eq:optimal_feedback} in case the value function is used as CLF \cite{Lohmann2023DiscussionOnNonlinQCAndSontagsFormula}.

\printbibliography

@book{Khalil2002NonlinearSystems,
   author = {Hassan K. Khalil},
   edition = {Third Edition},
   isbn = {9780130673893},
   publisher = {Prentice Hall},
   title = {Nonlinear Systems},
   url = {https://books.google.de/books?id=t\_d1QgAACAAJ},
   year = {2002}
}

@book{Freeman1996RobustNonlinearControlDesign,
   author = {Randy A. Freeman and Petar Kokotović},
   city = {Boston, MA},
   doi = {10.1007/978-0-8176-4759-9},
   isbn = {978-0-8176-4758-2},
   publisher = {Birkhäuser Boston},
   title = {Robust Nonlinear Control Design},
   url = {http://link.springer.com/10.1007/978-0-8176-4759-9},
   year = {1996}
}

@inproceedings{Freeman1996CLFNewIdeasFromAnOldSource,
   author = {R.A. Freeman and J.A. Primbs},
   doi = {10.1109/CDC.1996.577294},
   isbn = {0-7803-3590-2},
   booktitle = {Proceedings of 35th IEEE Conference on Decision and Control},
   pages = {3926-3931},
   publisher = {IEEE},
   title = {Control Lyapunov functions: new ideas from an old source},
   volume = {4},
   url = {http://ieeexplore.ieee.org/document/577294/},
   year = {1996}
}

@article{Sontag1989UniversalConstructionOfArtsteinsTheorem,
   author = {Eduardo D. Sontag},
   doi = {10.1016/0167-6911(89)90028-5},
   issn = {01676911},
   issue = {2},
   journal = {Systems and Control Letters},
   pages = {117-123},
   title = {A 'universal' construction of Artstein's theorem on nonlinear stabilization},
   volume = {13},
   year = {1989}
}

@book{Anderson1989,
   author = {B.D.O. Anderson and J.B. Moore},
   isbn = {0-13 -638651-2},
   publisher = {Prentice-Hall International, Inc.},
   title = {Optimal Control},
   year = {1989}
}

@book{Sepulchre1997,
   author = {R. Sepulchre and M. Janković and P. V. Kokotović},
   city = {London},
   doi = {10.1007/978-1-4471-0967-9},
   isbn = {978-1-4471-1245-7},
   publisher = {Springer London},
   title = {Constructive Nonlinear Control},
   url = {http://link.springer.com/10.1007/978-1-4471-0967-9},
   year = {1997}
}

@phdthesis{Primbs1999Diss,
   author = {J.A. Primbs},
   institution = {California Institute of Technology},
   pages = {129},
   title = {Nonlinear Optimal Control: A Receding Horizon Approach},
   url = {http://www.stanford.edu/~japrimbs/Publications/primbsThesis - Copy.pdf},
   year = {1999}
}

@article{Sackmann2000ModifiedOptimalControl,
   author = {Martin S. Sackmann and Volker G. Krebs},
   doi = {10.1016/s1474-6670(17)37190-2},
   issn = {14746670},
   issue = {13},
   journal = {IFAC Proceedings Volumes},
   pages = {199-204},
   publisher = {Elsevier},
   title = {Modified Optimal Control: Global Asymptotic Stabilization of Nonlinear Systems},
   volume = {33},
   url = {http://dx.doi.org/10.1016/S1474-6670(17)37190-2},
   year = {2000}
}

@article{Primbs1998CLFBasedRHC,
   author = {James A. Primbs and Monica Giannelli},
   doi = {10.1109/cdc.1998.758478},
   isbn = {0780343948},
   issn = {01912216},
   issue = {December},
   journal = {Proceedings of the IEEE Conference on Decision and Control},
   pages = {1382-1383},
   title = {Control Lyapunov function based receding horizon control for time-varying systems},
   volume = {2},
   year = {1998}
}

@article{Lohmann2023DiscussionOnNonlinQCAndSontagsFormula,
   author = {Boris Lohmann and Joscha Bongard},
   month = {1},
   title = {A Discussion on Nonlinear Quadratic Control and Sontag's Formula},
   url = {http://arxiv.org/abs/2301.10594},
   year = {2023}
}

@article{Yu2001ComparisonOfNonlinearControl,
   author = {Jie Yu and Ali Jadbabaie and James Primbs and Yun Huang},
   doi = {10.1016/S0005-1098(01)00149-2},
   issn = {00051098},
   issue = {12},
   journal = {Automatica},
   pages = {1971-1978},
   title = {Comparison of nonlinear control design techniques on a model of the Caltech ducted fan},
   volume = {37},
   year = {2001}
}

@book{Lewis2012,
   author = {Frank L. Lewis and Draguna L. Vrabie and Vassilis L. Syrmos},
   doi = {10.1002/9781118122631},
   isbn = {9780470633496},
   journal = {Optimal Control: Third Edition},
   month = {1},
   pages = {1-540},
   publisher = {Wiley},
   title = {Optimal Control},
   url = {https://onlinelibrary.wiley.com/doi/book/10.1002/9781118122631},
   year = {2012}
}

@article{Manchester2017,
   author = {Ian R. Manchester and Jean Jacques E. Slotine},
   doi = {10.1109/TAC.2017.2668380},
   issn = {00189286},
   issue = {6},
   journal = {IEEE Transactions on Automatic Control},
   pages = {3046-3053},
   title = {Control Contraction Metrics: Convex and Intrinsic Criteria for Nonlinear Feedback Design},
   volume = {62},
   year = {2017}
}

\end{document}